\documentclass[aps,twocolumn,epsfig,eqsecnum,showpacs]{revtex4}
\usepackage{amsthm}
\usepackage{amsmath}
\usepackage{amsfonts}
\usepackage{amssymb}
\usepackage{graphicx}
\newcommand\be{\begin{eqnarray}}
\newcommand\ee{\end{eqnarray}}
\newcommand\ba{\begin{array}}
\newcommand\ea{\end{array}}
\newcommand{\ket}[1]{|#1\rangle}
\newcommand{\bra}[1]{\langle#1|}

\newtheorem{theorem}{Theorem}
\newtheorem{lemma}{Lemma}

\theoremstyle{definition}

\def\r{\rangle}
\def\l{\langle}
\def\Tr{{\rm Tr}}

\def\cH{{\cal H}}
\def\cS{{\cal S}}

\def\cT{{\cal T}}

\def\cF{{\cal F}}

\def\cE{{\cal E}}

\def\cA{{\cal A}}

\begin{document}
\title{Repeatable quantum memory channels}
\author{Tom\'a\v s Ryb\'ar$^1$ and M\'ario Ziman$^{1,2}$}
\affiliation{$^{1}$Research Center for Quantum Information, Slovak Academy of Sciences, D\'ubravsk\'a cesta 9, 845 11 Bratislava, Slovakia \\
$^{2}$ Faculty of Informatics, Masaryk University, Botanick\'a 68a,
602 00 Brno, Czech Republic}
\begin{abstract}
Within the framework of quantum memory channels we introduce the notion
of repeatability of quantum channels. In particular, a quantum
channel is called repeatable if there exist a memory device implementing
the same channel on each individual input. We show that
random unitary channels can be implemented in a repeatable fashion,
whereas the nonunital channels cannot.
\end{abstract}
\pacs{03.65.Ta,03.65.Yz,03.67.Hk}
\maketitle


\section{Introduction}

A quantum channel is any transformation taking a state $\varrho$ 
of a quantum system as an input and transforming it to some 
state $\varrho^\prime$ on the output. Within the standard model 
\cite{davies,nielsen}
of quantum dynamics the channels are represented by completely
positive trace-preserving linear maps acting on the set of
(trace-class) Hilbert space operators $\cT(\cH)$. Let us note that
quantum states are represented by so-called density operators, i.e.
positive trace-class operators with unit trace. The physical picture
of quantum channels as the correct description of the evolutions of open 
quantum systems follows from the Stinespring theorem \cite{stinespring}.
According to
this theorem each quantum channel can be understood as
the unitary evolution of the isolated ``supersystem'' composed of the
system and its environment, where the unitary evolution is governed
by the Schr\"odinger equation (see Fig.\ref{fig:one}). In other words 
quantum channels are implemented by suitable {\it quantum devices}
consisting of intrinsic degrees of freedom (associated with the environment) 
and acting on the system via particular interactions between the system
and the environment.
 
The statistical nature of quantum physics requires that 
the experiments must be typically repeated large number 
of times in order to make some relevant conclusions about the
properties of quantum devices. A typical example is the problem 
of quantum channel tomography, in which the goal is to identify which 
channel the given device is implementing.  Any estimation procedure is based on
repeated use of the device. Considering the above model of the quantum 
device implementing some quantum channel we encounter the following problem. 
Although the concept of the quantum channel itself does not need
any particular specification of the environment properties, for the
repeated use of the same device the details about the environment can play
a significant role. In particular, let us consider the following (not entirely 
realistic) example. Consider an optical device ``storing'' a single photon 
in some polarization state $\xi$. After inserting another photon 
(in the polarization state $\varrho_1$) this device
will output the photon that was originally stored in it and the new photon 
will remain stored in the device. From the theory point of 
view the device implements the transformation
mapping the whole state space onto the state $\xi$, i.e. 
$\cS(\cH)\mapsto\xi$, where by $\cS(\cH)$ we denote the system's state space. 
However, using the same fiber once more we get the
transformation $\cS(\cH)\mapsto\varrho_1$, i.e. the channel action is
completely different (unless $\varrho_1=\xi$). The main aim of this paper 
is to analyze and characterize the situations, in which the 
device can be reused infinitely many times and still implementing the same 
quantum channel. As we shall see such reusable devices would be good 
for saving resources. Instead of infinite amount of resources, needed to
provide the channel transformation forever, finite resources 
would be sufficient. 

The paper is organized as follows: In the Section II we shall recall 
the basics of quantum memory channels, after that in Section III we shall 
define the problems of reusability of quantum channels and prove
the main theorems. In the last Section we shall discuss the derived
results.

\section{The effect of memory}

In the case when the subsequent
actions of the device are independent of the previous ones we say that 
the device implements a {\it memoryless channel}. If the output does not
depend on future inputs we say that the channel is {\it causal}. 
Let us note that memoryless channels are automatically causal. 
In what follows we shall assume that all physically relevant channels
are causal. Under such condition it was shown in the seminal paper
by Kretschmann and Werner \cite{kretschmann} that each causal memory 
channel can be understood as a sequence of collisions between the 
system and its environment playing the role of the memory.
In the last few years different aspects of quantum memory channels 
attracted researchers 
\cite{kretschmann,macchiavelo,giovanetti,mancini,cerf,plenio} 
and many interesting
results have been achieved concerning capacity, structure and physical
implementations for memory channels.

A personification of the Stinespring's theorem describing 
one usage of a device implementing a quantum channel 
is depicted in Fig.\ref{fig:one}.
According to this picture the device is consisting of some 
internal degrees of freedom
forming the effective environment affecting the system transferred through 
the channel. We shall refer to this internal degrees of freedom as to 
channel's memory associated with the Hilbert space $\cH_{\rm M}$.
The interaction between the system and the memory is described by a unitary
transformation 
$U:\cH_{\rm M}\otimes\cH_{\rm S}\to \cH_{\rm M}\otimes\cH_{\rm S}$, where
$\cH_{\rm S}$ is the system's Hilbert space. Assuming that the memory is 
initialized to state $\xi$ the channel reads
\be
\cE_\xi[\varrho]=\Tr_{\rm M}[U(\xi\otimes\varrho)U^\dagger]\, ,
\ee
where $\Tr_{\rm M}$ denotes the partial trace over the memory system.

\begin{figure}
\begin{center}
\includegraphics[width=2.65cm]{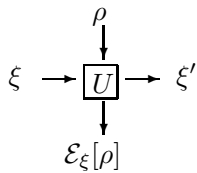}
\caption{One use of quantum channel \label{fig:one}}
\end{center}
\end{figure}

We see that due to the interaction the
state of the memory has changed. In particular
\be
\xi^{\prime}=\Tr_{\rm S}[U(\xi\otimes\varrho)U^{\dagger}]\equiv\cF_\varrho[\xi]\, .
\ee
So how to reuse the same device again and implement the same quantum channel? 
Clearly, the only way is to apply some \emph{reset} operation applied after 
each usage of the device and always initializing the memory into the 
fixed state $\xi$ (see Fig.\ref{fig:two}). Let us note that this memory state 
might be unknown for us. The reset operation can be achieved by using different 
procedures that are known as {\it relaxation processes}. 
However, let us note that the action of the reset transformation 
is a bit cheating, because it is not unitary and therefore, 
it can be implemented only by employing some additional
environment/memory system. Thus, resetting operation increases the total
cost of the channel's implementation measured in the size of needed
quantum resources. We shall see an example of the 
implementation of the reset operation at the end of Section III.
 
\begin{figure}
\begin{center}
\includegraphics[width=8cm]{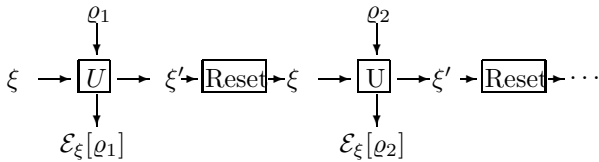}
\end{center}
\caption{Standard (memoryless) model.}
\label{fig:two}
\end{figure}

Nevertheless, within the framework of quantum (causal) memory channels  
\cite{kretschmann} the role of the reset operation is to suppress
the effect of memory on subsequent channel actions, i.e. 
to get device implementing a memoryless channel. In the problems 
dealing with channel estimation such reset procedures are implicitly 
assumed. In fact, for many systems the (approximate) relaxation 
is experimentally justified. Our goal here is not to analyze the 
effect of approximate relaxations, or verify the validity of this 
model in particular physical situations. We shall address more general 
question whether the perfect relaxation is possible, or needed, 
within the unitary model of quantum memory channel (see
Fig.\ref{fig:three}). In other words we are asking under which conditions 
the model depicted on Fig.\ref{fig:three} can embed the reset
model depicted in Fig.\ref{fig:two}.

\begin{figure}[!hbp]
\begin{center}
\includegraphics[width=6.5cm]{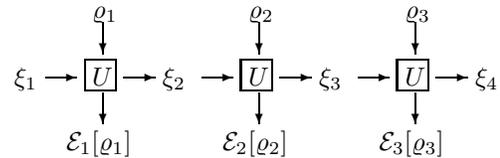}
\caption{Unitary memory model. \label{fig:three}}
\end{center}
\end{figure}

Let us now define the model and formulate the problem in an intuitive 
way. According to Fig.\ref{fig:three} after the $n$th usage of the device 
the memory is described by the state
\be
\nonumber
\xi_{n+1}&=&\Tr_{{\rm S}^{\otimes n}}
[U_n(\xi\otimes\varrho_1\otimes\cdots\otimes\varrho_n)U_n^\dagger)]\\
&=& \cF_n\circ\cdots\circ\cF_1[\xi_1]\, ,
\ee
where $U_n=U_{{\rm MS}_n}\cdots U_{{\rm MS}_1}$, $U_{{\rm MS}_j}$ acts
on the memory and $j$th input system, $\cF_j\equiv\cF_{\varrho_j}$ and
${\rm S}^{\otimes n}$ denotes the composite system of $n$ input systems.
In this model we assume that the input states $\varrho_1,\dots,\varrho_n$ 
are uncorrelated. 
On one hand this assumption is motivated by standard algorithms
for channel testing. On the other hand if we allow correlations, then
we are getting outside of the validity of the mathematical model for 
quantum channel as depicted in Fig.\ref{fig:one}. In fact, after 
the first usage of the device the subsequent inputs become 
to be correlated with the memory system. These issues extending the
standard quantum channel framework are studied in Refs.
\cite{pechukas,stelmachovic,jordan,zyczkowski,ziman}.

The induced channel transformation $\cE_j$ on the $j$th trial depends 
on the state of the memory $\xi_{j-1}$ which is dependent on the
particular choice of the input states $\varrho_1,\dots,\varrho_{j-1}$.
We shall investigate in which cases the particular choice 
of the input states does not matter. But before that, 
let us consider the following example.

\subsection{Memory channel induced by SWAP interaction}
Consider an experimentalist who would like to repeat his experiment 
aiming to describe the device schematically depicted on Fig.\ref{fig:one}.
If he is able to set the initial conditions of the experiment to some 
initial values and repeat the experiment he is fine. This refers to 
the model on Fig.\ref{fig:two}. If not, then he is not repeating the same 
experiment (with the same channel) again. This has some severe consequences 
for channel tomography. As an illustrative example we will consider 
a quantum device with a two-dimensional memory implementing a single qubit
channel. The interaction between the system (qubit) and the memory 
is described by the SWAP transformation $U_{\rm SWAP}$ acting as follows
\be
\varrho\otimes\xi\mapsto U_{\rm SWAP}(\varrho\otimes\xi)U_{\rm SWAP}^\dagger=
\xi\otimes\varrho\, .
\ee

In the memoryless settings this interaction induces completely contractive 
channels mapping the whole state space into the initial state of the memory
$\xi$. However, if the reset operation is not applied the situation is
completely different. Consequently, the result of the channel estimation
will depend on the particular algorithm we choose. Let us note that SWAP
transformation describes exactly the example mentioned in the introduction.
In particular, the $(j-1)$th input is mapped into $j$th output, i.e.
in $j$th run we observe the transformation $\varrho_j\mapsto\varrho_{j-1}$.

In order to estimate the qubit channel we can use six probe states:
the eigenvectors of $\sigma_x,\sigma_y,\sigma_z$ operators. In the usual
experiment we first insert $N$ times state $\psi_1$, and after that
$N$ times the state $\psi_2$, etc. In such case we shall observe
the transformations 
$\psi_1\mapsto \frac{1}{N}\xi+\frac{N-1}{N}\psi_1\approx\psi_1$ for 
large $N$ and similarly for any other test state. Based on this
we shall conclude that the channel is ideal, i.e. $\varrho\mapsto\varrho$.
However, starting to use this ideal channel for communication we very
quickly come into troubles. In the usual communication the states encoding
the characters are used randomly. If we performed the channel tomography
with randomly chosen test states we would find a completely different
channel. In fact, each state would be mapped into the complete mixture, since
the average test state is the total mixture. The differences between the 
ideal channel and completely noisy channel are obvious. As a result we see that
the standard channel tomography loses its point and new methods 
must be developed for the estimation of memory channels, but this 
problem is beyond the scope of this paper. 

\section{Repeatable quantum memory channels}

In order to avoid the problems mentioned in the previous Section we shall
focus on existence of reusable quantum devices implementing in each trial
the same channel. That is, our goal is to investigate the repeatability
of the quantum memory channel induced by a fixed unitary transformation 
$U$ as depicted in Fig.\ref{fig:three}. By {\it repeatable} quantum channels
we understand linear trace-preserving completely positive maps for which
there exists a unitary transformation $U$ and some initial
memory state $\xi$ such that $\cE_1=\dots=\cE_n=\cE$ for all $n>0$.
The key feature of repetable quantum memory channels is that 
the memory effects are suppressed. We call the triple $\l\cE,U,\xi\r$ 
a repeatable quantum memory channel of $\cE$. We have two basic questions:
\begin{itemize}
\item{}
Which channels are repeatable?
\item
Which channels are not repeatable?
\end{itemize}
Answering these two questions it is of interest to understand the consequences.
What does it mean when a channel is not repeatable? Does 
it mean that
it cannot be implemented at all? Let us remind that the considered model
is in fact the most general one in which the concept of channel makes 
sense. We shall get back to this question later at the end of this section.

A partial answer to the first question is given in the following theorem.
\begin{theorem}
If $\cE$ is a random unitary channel, i.e. 
$\cE[\varrho]=\sum_j p_j U_j\varrho U_j^\dagger$ and 
$0\le p_j\le 1,\sum_j p_j=1$, then it is repeatable.
\end{theorem}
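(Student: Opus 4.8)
The plan is to construct an explicit repeatable memory device for any random unitary channel $\cE[\varrho]=\sum_j p_j U_j\varrho U_j^\dagger$. The natural idea is to let the memory carry a ``random index'' whose distribution is left invariant by each use of the device. Concretely, I would take the memory Hilbert space to be $\cH_{\rm M}=\cH_{\rm R}\otimes\cH_{\rm S}'$, where $\cH_{\rm R}$ has an orthonormal basis $\{\ket{j}\}$ labelling the unitaries $U_j$, and $\cH_{\rm S}'$ is a spare copy of the system's space. The memory is initialized in $\xi=(\sum_j p_j\ket{j}\bra{j})\otimes\omega$ for some (irrelevant) fixed state $\omega$ on $\cH_{\rm S}'$. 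The interaction unitary $U$ on $\cH_{\rm R}\otimes\cH_{\rm S}'\otimes\cH_{\rm S}$ should (i) apply the controlled unitary $\sum_j\ket{j}\bra{j}\otimes \openone\otimes U_j$ to the incoming system conditioned on the register, and (ii) SWAP the freshly-processed system into the slot $\cH_{\rm S}'$ while bringing out whatever was sitting there. The point of the extra SWAP is bookkeeping: it keeps the register classically correlated with the stored system but, crucially, after tracing out the outgoing system the register is again in the state $\sum_j p_j\ket{j}\bra{j}$, uncorrelated with $\cH_{\rm S}'$.

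The key steps, in order, are: first, write $U=U_{\rm SWAP}^{{\rm S}'{\rm S}}\cdot\big(\sum_j\ket{j}\bra{j}_{\rm R}\otimes\openone_{{\rm S}'}\otimes (U_j)_{\rm S}\big)$ and check it is unitary (it is a product of unitaries). Second, compute $\cE_1[\varrho]=\Tr_{\rm M}[U(\xi\otimes\varrho)U^\dagger]$ directly: the controlled unitary turns $\varrho$ into $U_j\varrho U_j^\dagger$ on the branch $j$, the SWAP then moves that to the $\cH_{\rm S}'$ slot and moves $\omega$ to the system slot, so tracing over $\cH_{\rm M}=\cH_{\rm R}\otimes\cH_{\rm S}'$ gives $\sum_j p_j U_j\varrho U_j^\dagger=\cE[\varrho]$ — wait, I must be careful about which tensor factor is ``the output'': the output is the $\cH_{\rm S}$ slot, which after the SWAP holds $\omega$, so I should instead choose the SWAP to exchange the ``just-processed'' system with the memory's spare slot in the other direction, or simply relabel so that the output register is the one carrying $U_j\varrho U_j^\dagger$. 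The cleanest fix is to define the output to be the slot that, post-interaction, contains the processed state; with that identification $\cE_1=\cE$. Third, and this is the heart of the argument, compute the updated memory state $\xi_1=\cF_\varrho[\xi]=\Tr_{\rm S}[U(\xi\otimes\varrho)U^\dagger]$ and verify $\xi_1=(\sum_j p_j\ket{j}\bra{j})\otimes(\text{something fixed})$ — in fact $\xi_1=\xi$ when $\omega$ is chosen so that the new spare slot equals $\omega$; since the spare slot after the SWAP holds the old system's transformed state, one instead needs to argue that only the \emph{reduced} register state matters for all subsequent channel actions, i.e.\ that $\cE_{n}$ depends on $\xi_{n-1}$ only through its $\cH_{\rm R}$ marginal, which stays $\sum_j p_j\ket{j}\bra{j}$ forever because the controlled-$U_j$ and SWAP never touch the diagonal populations of the register.

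The main obstacle I anticipate is exactly this last point: showing that the correlations that inevitably build up between the register $\cH_{\rm R}$ and the accumulated stored systems in $\cH_{\rm S}'$ do \emph{not} feed back into the induced channel on later inputs. One has to check that on each use, the incoming fresh system $\varrho_n$ interacts only with the register (via controlled-$U_j$) and with the spare slot (via SWAP), and that the controlled unitary's action on $\varrho_n$ depends on the register only through its \emph{diagonal} in the $\{\ket{j}\}$ basis — which is $(p_1,\dots,p_k)$ independently of history. Formally, I would prove by induction that for every $n$, $\Tr_{{\rm S}'}[\xi_n]=\sum_j p_j\ket{j}\bra{j}$, and that this marginal is all that enters $\cE_{n+1}[\varrho_{n+1}]=\sum_j\langle j|\xi_n|j\rangle\, U_j\varrho_{n+1}U_j^\dagger$ after partial-tracing out everything but the output. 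Once that structural fact is in hand, $\cE_{n+1}=\cE$ for all $n$, so the triple $\l\cE,U,\xi\r$ is a repeatable memory channel, and additionally one gets for free that the SWAP-based ``reset'' discussed earlier in the section is realized internally — which is the promised end-of-section example of implementing the reset operation.
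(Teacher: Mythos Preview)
Your core insight---that the induced channel depends only on the diagonal $(\langle j|\xi|j\rangle)_j$ of the register in the $\{\ket{j}\}$ basis, and that the controlled unitary $\sum_j\ket{j}\bra{j}\otimes U_j$ preserves that diagonal---is exactly the mechanism the paper uses. But the paper's construction is strictly simpler than yours: the memory is just $\cH_{\rm M}=\cH_{\rm R}$ (no spare slot $\cH_{\rm S}'$), the interaction is the bare controlled unitary $U=\sum_j\ket{j}\bra{j}\otimes U_j$, and one computes directly
\[
\cE_\xi[\varrho]=\sum_j\xi_{jj}\,U_j\varrho U_j^\dagger,\qquad
\langle j|\cF_\varrho[\xi]|j\rangle=\xi_{jj}\ \text{for all }\varrho,
\]
so $\cE_1=\cE_2=\cdots=\cE$. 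No SWAP, no induction on a marginal, no relabelling.

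The extra SWAP and spare slot you introduced are not just unnecessary---they are the source of the confusion you ran into. With $U=U_{\rm SWAP}^{{\rm S}'{\rm S}}\cdot(\sum_j\ket{j}\bra{j}\otimes\openone\otimes U_j)$ the \emph{output} on $\cH_{\rm S}$ at step $n+1$ is whatever was sitting in $\cH_{\rm S}'$ inside $\xi_n$, not $\sum_j p_jU_j\varrho_{n+1}U_j^\dagger$; your ``fix'' of redefining which tensor factor is the output is not available in the memory-channel framework, where $\cH_{\rm M}$ and $\cH_{\rm S}$ are fixed and the output is always $\Tr_{\rm M}[\cdot]$. If you insist on keeping the spare slot, the correct repair is simply to drop the SWAP---at which point $\cH_{\rm S}'$ is inert and you recover the paper's construction verbatim. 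So: right idea, but strip away the SWAP machinery; the paper's two-line computation already does the job.
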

\begin{proof}
We shall show that here exists a repeatable Stinespring dilation for any
random unitary channel. Consider a random unitary channel $\cE$. Define
a unitary transformation $U=\sum_j \ket{j}\bra{j}\otimes U_j$ 
acting on $\cH_{\rm M}\otimes\cH_{\rm S}$, with 
$\{\ket{j}\}$ being an orthonormal basis on $\cH_{\rm M}$
and $U_j$ are unitary transformation from the decomposition of the channel
$\cE$. The unitary transformations of such form are
also called {\it controlled-U transformations}. The
memory system plays the role of the controling system and 
the system itself is the target
system. Consider a general factorized input $\xi\otimes\varrho$
and calculate the states of the system and the memory after the
unitary transformation $U$ is applied. We obtain
\be
\nonumber
\cE[\rho] &=&  \Tr_{\rm M}[U(\xi\otimes\varrho)U^{\dagger}] \\
\nonumber &=&
\sum_{j,k} \Tr_{\rm M}[(\ket{j}\bra{j}\otimes U_i)(\xi\otimes\varrho)(\ket{k}\bra{k}\otimes U_k^{\dagger})] \\
\nonumber
&=& \sum_{j,k}\Tr[(\ket{j}\bra{j}\xi\ket{k}\bra{k})]\, (U_j\varrho U_k^{\dagger}) \\
&=&\sum_j \xi_{jj}\, U_j\varrho U_j^{\dagger}.
\ee
for the system's transformation and
\be
\nonumber
\cF[\xi] &=& \Tr_{\rm S}[U(\xi\otimes\varrho)U^{\dagger}] \\
\nonumber
&=& \sum_{j,k}\ket{j}\bra{j}\xi\ket{k}\bra{k}\,\Tr[U_j\varrho U_k^{\dagger}] \\
&=&\sum_{j,k}
\xi_{jk}\Tr[U_j\varrho U_k^{\dagger}]\,\ket{j}\bra{k}
\label{eq:ruc_mem}
\ee
for the memory transformation.

In order to implement random unitary channel $\cE[\varrho]=\sum_j p_j 
U_j\varrho U_j^\dagger$ it is sufficient to choose a state with diagonal 
elements $\xi_{jj}=\l j|\xi|j\r=p_j$. Let us note that diagonal elements
of density operator always form a probability distribution.
From Eq.(\ref{eq:ruc_mem}) it follows that diagonal elements of the
memory state are preserved, because
\be
\l j|\cF[\xi]|j\r=\sum_{j,k}\xi_{jk} \Tr[U_j\varrho U_k^\dagger]\, \delta_{jk}=
\xi_{jj}\, .
\ee
The last equality holds because $\Tr[U_j\varrho U_j^\dagger]=\Tr[\varrho]=1$.

Since the diagonal elements of $\xi$ defines the random unitary channels
and, moreover, they are preserved, it follows
that random unitary channels are indeed repeatable. In particular,
${\rm diag}[\xi_1]={\rm diag}[\xi_2]=\dots={\rm diag}[\xi_n]$
and therefore $\cE_1=\cE_2=\dots=\cE_n$ for all $n>0$.
\end{proof}

As a result we get that a particular Stinespring's dilation of any random unitary transformation forms a reusable quantum device, meaning that random unitary transformations are repeatable. Could it be that all transformations have such dilation? The following theorem gives a negative answer saying
that nonrepeatable channels do exist.

\begin{theorem}
If $\cE$ is a nonunital channel, i.e. $\cE[I]\ne I$, then it is not 
repeatable with finite memory.
\end{theorem}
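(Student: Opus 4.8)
The plan is to argue by contradiction: suppose $\cE$ is nonunital yet repeatable with a finite-dimensional memory $\cH_{\rm M}$, witnessed by a unitary $U$ on $\cH_{\rm M}\otimes\cH_{\rm S}$ and an initial memory state $\xi$. Write $d=\dim\cH_{\rm S}$ and feed the device the maximally mixed state $I/d$ over and over again. Let $\Phi\equiv\cF_{I/d}$ denote the induced memory evolution, $\Phi[\sigma]=\Tr_{\rm S}[U(\sigma\otimes I/d)U^\dagger]$; this is a completely positive trace-preserving map on $\cH_{\rm M}$, and the memory states visited along this run are $\xi_n=\Phi^{n-1}[\xi]$. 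Applying repeatability to the input sequences $(I/d,\dots,I/d,\varrho)$ forces $\Tr_{\rm M}[U(\xi_n\otimes\varrho)U^\dagger]=\cE[\varrho]$ for every $n$ and every state $\varrho$; in particular $\Tr_{\rm M}[U(\xi_n\otimes I/d)U^\dagger]=\cE[I/d]$ for all $n$.

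Next I would extract a single memory state that is simultaneously \emph{stationary} under $\Phi$ and still reproduces $\cE[I/d]$. Let $G$ be the set of density operators $\sigma$ on $\cH_{\rm M}$ with $\Tr_{\rm M}[U(\sigma\otimes I/d)U^\dagger]=\cE[I/d]$; it is cut out by affine constraints, hence closed and convex, and by the previous paragraph it contains every $\xi_n$. The Ces\`aro averages $\bar\xi_N=\frac1N\sum_{n=1}^N\xi_n$ therefore also lie in $G$, and since $\cH_{\rm M}$ is finite dimensional some subsequence converges to a state $\xi_*\in G$; from $\Phi[\bar\xi_N]-\bar\xi_N=\frac1N(\xi_{N+1}-\xi_1)\to 0$ and continuity of $\Phi$ one gets $\Phi[\xi_*]=\xi_*$. (Equivalently, apply Brouwer's fixed point theorem to $\Phi$ on the closed convex hull of $\{\xi_n\}\subseteq G$.) The hard part is precisely this step, and it is exactly here that finiteness of the memory is indispensable, through compactness: one must be sure the stationary memory state can be chosen inside the affine family of memory states reproducing $\cE$.

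To finish, I would look at the joint state immediately after the interaction, $\omega=U(\xi_*\otimes I/d)U^\dagger$. Since $U$ is unitary, $\omega$ has the same spectrum as $\xi_*\otimes I/d$, so its von Neumann entropy is $S(\omega)=S(\xi_*)+\log d$. Its marginals are $\Tr_{\rm S}[\omega]=\Phi[\xi_*]=\xi_*$ and $\Tr_{\rm M}[\omega]=\cE[I/d]$, so subadditivity of the von Neumann entropy yields $S(\xi_*)+S(\cE[I/d])\ge S(\omega)=S(\xi_*)+\log d$, i.e. $S(\cE[I/d])\ge\log d$. But $\cE[I/d]$ is a state on the $d$-dimensional space $\cH_{\rm S}$, so $S(\cE[I/d])\le\log d$ with equality only for the maximally mixed state; hence $\cE[I/d]=I/d$, that is $\cE[I]=I$, contradicting the assumption that $\cE$ is nonunital. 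Thus no finite-memory unitary dilation can repeatably implement a nonunital channel, and once the stationary memory state is in hand the entropy estimate is immediate.
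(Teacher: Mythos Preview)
Your argument is correct, and it uses the same core ingredients as the paper's proof---unitary invariance of the von Neumann entropy, subadditivity, and the maximally mixed input---but you deploy them through a different technical device. The paper simply telescopes the one-step inequality $S(\varrho)+S(\xi_k)\le S(\cE[\varrho])+S(\xi_{k+1})$ over $n$ repetitions of the input $\varrho=I/d$ to obtain $n\bigl(S(I/d)-S(\cE[I/d])\bigr)\le S(\xi_{n+1})-S(\xi_1)\le\log\dim\cH_{\rm M}$, and then lets $n\to\infty$; finiteness of the memory enters only via the entropy bound $S(\xi)\le\log\dim\cH_{\rm M}$. You instead use finiteness through compactness, extracting a $\Phi$-fixed memory state $\xi_*$ in the affine set $G$ by Ces\`aro averaging, and then run the entropy inequality \emph{once} at that fixed point, where the memory-entropy terms cancel exactly. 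The paper's route is shorter and more elementary (no limit point or fixed-point extraction needed), while yours isolates a single clean equation $S(\cE[I/d])\ge\log d$ and makes transparent why the argument cannot survive to infinite-dimensional memory: the compactness that produces $\xi_*$ is gone, just as in the paper's version the bound $\log\dim\cH_{\rm M}$ is gone.
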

\begin{proof}
We shall prove that repeatability of quantum memory channel (specified by $U$) 
implies unitality of the induced channels $\cE$. Let us start with the 
entropy analysis of the memory channel. Becuase of the unitarity it follows
that 
\be
S(\varrho_1)+S(\xi_1)=S(U(\varrho_1\otimes\xi_1)U^\dagger)\, ,
\ee
where $S(\varrho)=-\Tr[\varrho\log\varrho]$ is the 
von Neumann entropy of state $\varrho$. The entropy is subaditive,
i.e. $S(\omega_{AB})\le S(\omega_A)+S(\omega_B)$, where 
$\omega_A=\Tr_B\omega_{AB}$ and $\omega_B=\Tr_A\omega_{AB}$ are the
states of the subsytems $A,B$, respectively. Applying this inequality
for our situation we obtain
\be
S(\varrho_1)+S(\xi_1)\leq S(\cE[\varrho_1])+S(\xi_2)\, .
\ee
Let us repeat the quantum memory channel $n$ times by using 
the same input state, i.e. $\varrho_1=\varrho_2=\dots=\varrho_n=\varrho$. 
The repeatability of the channel $\cE$ implies that
\be
n S(\varrho_1)+S(\xi_1)\leq n S(\cE[\varrho_1])+S(\xi_{n+1})\, .
\ee
From this immediately follows the inequality
 \be \label{eq:ent}
 n\Delta (\varrho_1) \leq  S(\xi_{n+1})-S(\xi_1)\leq \log(\dim\cH_{\rm M}),
 \ee
where $\Delta (\varrho_1) = S(\rho_1)-S(\cE[\rho_1])$. 

For unital channels the entropy cannot decrease, i.e.
$\Delta (\varrho_1)\le 0$ for all states $\varrho_1$ (see
Appendix).
Consequently, the above
inequality is satisfied by all unital channels. For nonunital channels 
the complete mixture decreases its entropy, i.e. $\Delta (\frac{1}{d}I)>0$.
The right hand side is bounded by the dimension of the memory system.
However, since $n$ is arbitrarily large, the left hand side goes to infinity, 
hence necessarily also the dimension of the memory system must be infinite.
Thus repeatability requires unitality as it is stated in the theorem
\end{proof} 

We say that a channel is $n$-repeatable if its action can be repeated
$n$-times. For nonunital channels and finite memory there exist $n$ 
($n>\frac{\log\dim(\cH_{\rm M})}{\Delta_{\rm max}}$) such that the channel
cannot be $n$-repeatable. Let us now discuss the power of infinite 
memory systems. Consider a memory consisting of $n$ systems in the 
same state and of the same dimension as is the system
under consideration, i.e. $\xi_1=\xi^{\otimes n}$ 
is the initial state of the memory. The action of the memory channel
can be decomposed into a unitary interaction implementing
the desired channel $\cE$ (encoded in the state $\xi$) by acting only on
one subsystem of the memory and the input, and an operation permuting
the memory subsystems by one to the left, i.e. the active memory 
subsystem is shifted to the end. For such quantum memory channel 
the $j$th input is effectively interacting with the $j$th subsystem 
in the state $\xi$, hence each input is transformed by the same mapping 
$\cE_\xi[\varrho_j]=\Tr_{\rm M}[U(\xi\otimes\varrho_j)U^\dagger]$.
In this way arbitrary channel has $n$-repeatable implementation 
for all $n<\infty$. The limiting case $n\to\infty$ is physically senseless, 
because the Hilbert space of infinitely many subystems is not separable. 
On the other hand no one is probably interested in infinitely many 
repetitions of the same channel. Let us note that this type 
of implementation is essentially based on the complete replacement 
of the device by a new one with the same properties.

\section{Conclusion}
We investigated the problem of reusability of quantum devices
implementing (in each single use) state transformations described by 
quantum channels. Due to interaction of the system with 
the device both, the system and the device,
are affected by some noise, hence the original settings of the
device have changed. Consequently the
repeated usage of the same quantum device can result in a different noise,
i.e. different quantum channel. This picture leads to an emergence of memory 
effects in the description of quantum channels. If the channel can be
repeated infinitely many times without resetting the memory 
we say it is repeatable. For such type of channels the memory effects
are supressed although the memory itself undergoes a nontrivial dynamics. 
It was shown in this paper that any random 
unitary channel is repeatable with a finite memory, whereas the 
repeatable implementation of nonunital channels requires infinite resources. 
For qubit channels we can make even stronger statement that unitality
is equivalent to repeatability, because each unital channel can be expressed
as a random unitary channel \cite{ruskai}. For general systems we leave 
the question of repetable implementation of unital, but not random 
unitary channels open.

One possible way how to tackle the problem is to investigate the channels 
that can be implemented by a quantum device 
with the memory initialized in the total mixture. For such channels the
reset operation can be implemented in a repeatable way, since the channel
$\cA$ transforming the whole state space into the total mixture is random
unitary and therefore is repeatable. 
That is, whatever is the output memory state,
it can be reset to the total mixture by using only finite resources. 
Interestingly, since the entropy of the total system is preserved,
it follows that if the memory is initially in the total mixture, 
then the implemented channel is necessarily unital. In fact, 
if the system is initially in the total mixture, then necessarily 
also output must be in the total mixture, because the entropy achieves its
maximum for a unique state being the total mixture. But, this is nothing 
else as the unitality of the channel. It is an open problem whether there are
some unital but not random unitary channels that are implementable in the 
described repeatable way. 

Let us note that the concept of repeatability is similar to the
concept of quantum cloning \cite{gisin} in a sense that the channels 
(just like copies in quantum cloning) are not completely independent 
if measurements are taken into account. In fact, the memory system may act as 
a mediator of correlations between the channel outputs although the inputs
are factorized. For sure, the impact of measurements on repeatability 
of quantum memory channels deserves further investigation. 
The presented analysis of the repeatability of quantum channels is a part 
of the research program aiming to understand and develop realistic models 
of quantum dynamics of open systems including the memory effects. 

\acknowledgements This work was supported in part by the European
Union  projects QAP and the APVV project QIAM. TR acknowledges the support 
of the project APVV LPP-0264-07.
\begin{appendix}
\section{Monoticity of von Neumann entropy under unital channels}
\begin{lemma}
If $\cE$ is a unital channel, then $S(\cE[\varrho])\geq S(\varrho)$
for all states $\varrho$.
\end{lemma}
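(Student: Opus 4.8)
The plan is to derive the lemma from the data-processing inequality for the quantum relative entropy. Recall that for any two states $\varrho,\sigma$ the relative entropy $S(\varrho\|\sigma)=\Tr[\varrho\log\varrho]-\Tr[\varrho\log\sigma]$ is nonincreasing under the action of any completely positive trace-preserving map $\cE$, i.e. $S(\cE[\varrho]\|\cE[\sigma])\le S(\varrho\|\sigma)$. This monotonicity (which is equivalent to the strong subadditivity of the von Neumann entropy) is the only nontrivial ingredient, and I would simply invoke it with an appropriate citation rather than reprove it; it is the real substance behind the statement and the single genuine obstacle.

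First I would specialize the second argument to the maximally mixed state $\sigma=\frac{1}{d}I$, where $d=\dim\cH$. Since $\log(\frac{1}{d}I)=-(\log d)\,I$, a one-line computation gives $S(\varrho\|\tfrac{1}{d}I)=-S(\varrho)+\log d$ for every state $\varrho$; thus the relative entropy to $\frac{1}{d}I$ is exactly the entropy deficit with respect to its maximal value $\log d$.

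Second, I would invoke unitality: by assumption $\cE[\frac{1}{d}I]=\frac{1}{d}I$. Applying the monotonicity inequality with this choice of $\sigma$ therefore yields
\begin{equation}
-S(\cE[\varrho])+\log d = S\!\left(\cE[\varrho]\,\big\|\,\tfrac{1}{d}I\right)\le S\!\left(\varrho\,\big\|\,\tfrac{1}{d}I\right)=-S(\varrho)+\log d .
\end{equation}
Cancelling the common term $\log d$ and rearranging gives $S(\cE[\varrho])\ge S(\varrho)$, which is the claim.

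For completeness I would also point out the elementary self-contained argument available whenever $\cE$ happens to be a random unitary channel (in particular always in dimension two, by \cite{ruskai}): using concavity and unitary invariance of the von Neumann entropy, $S(\cE[\varrho])=S\big(\sum_j p_jU_j\varrho U_j^\dagger\big)\ge\sum_j p_j S(U_j\varrho U_j^\dagger)=\sum_j p_j S(\varrho)=S(\varrho)$. This simple route does not cover all unital channels in higher dimensions, which is precisely why the relative-entropy argument is needed in general.
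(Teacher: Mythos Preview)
Your proof is correct and follows essentially the same route as the paper: invoke monotonicity of the quantum relative entropy, specialize the second argument to the maximally mixed state, use unitality to identify $\cE[\tfrac{1}{d}I]=\tfrac{1}{d}I$, and rewrite $S(\varrho\|\tfrac{1}{d}I)=\log d-S(\varrho)$ to conclude. The additional random-unitary argument via concavity is a nice bonus but not part of the paper's proof.
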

\begin{proof}
The proof of entropy monoticity for unital
channels is a consequence of the monoticity of the relative entropy
\cite{rel_entropy}. In particular, for arbitrary quantum channel $\cE$
\be
S(\cE[\varrho]||\cE[\omega])\leq S(\varrho||\omega)\,,
\ee
where $S(\varrho||\omega)=\Tr[\varrho(\log\varrho-\log\omega)]$ is the
quantum relative entropy. Setting $\omega=\frac{1}{d}I$ we get
$S(\varrho||I/d)=-S(\varrho)+\log d$. Using this fact and
assuming that $\cE$ is unital the above inequality can be rewritten as
\be
\nonumber
S(\cE[\varrho]||I/d)&\leq& S(\varrho||I/d)\\
\nonumber
-S(\cE[\varrho])&\leq& -S(\varrho)\, ,
\ee
from which the lemma follows.
\end{proof}

\end{appendix}




\begin{thebibliography}{00}

\bibitem{davies}
E. B. Davies,
{\it Quantum theory of open systems},
(Academic Press, 1976)

\bibitem{nielsen}
M. A. Nielsen and I. L. Chuang,
{\it Quantum Computation and Quantum Information},
(Cambridge University Press, Cambridge, 2000)

\bibitem{stinespring}
W. F. Stinespring, 
{\it Positive Functions on C*-algebras}, 
Proceedings of the American Mathematical Society 6, 211-216 (1955)

\bibitem{kretschmann}
D. Kretschmann, and R. Werner,
{\it Quantum channels with memory},
Phys. Rev. A (2005)

\bibitem{macchiavelo}
C. Macchiavello, G. M. Palma and S. Virmani,
{\it Entangled states maximize the two qubit channel capacity for some Pauli channels with memory},
Phys. Rev. A 69, 010303(R), (2004).

\bibitem{giovanetti}
V. Giovannetti,
{\it A dynamical model for quantum memory channels},
J. Phys. A: Math. Gen. 38, 10989 (2005).

\bibitem{mancini}
G. Ruggeri, G. Soliani, V. Giovannetti and S. Mancini
Information transmission through lossy bosonic memory channels
Europhys. Lett. 70, 719 (2005)

\bibitem{cerf}
E. Karpov, D. Daems and N. J. Cerf,
{\it Entanglement enhanced classical capacity of quantum communication channels with correlated noise in arbitrary dimensions},
Phys. Rev. A 74, 032320 (2006)

\bibitem{plenio}
M. B. Plenio and S. Virmani,
{\it Spin chains and channels with memory},
e-print arXiv quant-ph/0702059

\bibitem{pechukas}
P. Pechukas, 
{\it Reduced Dynamics Need Not Be Completely Positive},
Phys.Rev.Lett. 73, 1060 (1994)

\bibitem{stelmachovic}
P. \v Stelmachovi\v c, V. Bu\v zek,
{\it Dynamics of open systems initialy entangled with environment: Beyond the Kraus representation},
Phys.Rev.A 64, (2001)

\bibitem{jordan}
T. Jordan, A. Shaji, E. C. G. Sudarshan,
{\it The dynamics of initially entangled open quantum systems},
Phys.Rev.A 70, 052110 (2004)

\bibitem{zyczkowski}
H. Carteret, D. R. Terno, K. Zyckowski,
{\it Physical accessibility of non-completely positive maps},
e-print arXiv quant-ph/0512167.

\bibitem{ziman}
M. Ziman,
{\it Quantum process tomography: the role of initial correlations},
e-print arXiv quant-ph/0603166.


\bibitem{ruskai}
M. B. Ruskai, S. Szarek, and E. Werner,
{\it A characterizarion of completely positive tracepreserving maps on ${\cal M}_2$},
Lin. Alg. Appl. 347, 159 (2002)

\bibitem{gisin}
V. Scarani, S. Iblisdir, N. Gisin, and A. Acin, 
{\it Quantum cloning}, 
Rev.Mod.Phys. 77, 1225 (2005)

\bibitem{rel_entropy}
D. Petz,
{\it Monotonicity of quantum relative entropy revisited},
Rev. Math. Phys. 15, 79-91 (2003).

\end{thebibliography}
\end{document}